\pdfoutput=1
\documentclass[letterpaper, 10 pt, conference]{ieeeconf}  

\IEEEoverridecommandlockouts                              

\overrideIEEEmargins                                      


\usepackage{mathptmx}       
\usepackage{helvet}         
\usepackage{courier}        
\usepackage{type1cm}        
   \usepackage{lipsum}                         
\usepackage{makeidx}         
\usepackage{graphicx}        
\usepackage{multicol}        
\usepackage[bottom]{footmisc}
\usepackage{comment}
\usepackage{soul}
\RequirePackage{ifpdf}
\usepackage{amsmath,amssymb,amsbsy}

\usepackage{amsthm} 
\newtheorem{theorem}{Theorem}[section]
\newtheorem{lemma}[theorem]{Lemma}
\newtheorem{definition}[theorem]{Definition}	
\newtheorem{proposition}[theorem]{Proposition}
	
\newtheorem{remark}[theorem]{Remark}

\usepackage[T1]{fontenc}      

\usepackage{mathptmx}         
\usepackage{helvet}           
\usepackage{courier}          
\usepackage{type1cm}          


\usepackage{calc}
\usepackage[shortlabels]{enumitem}
\usepackage[shortcuts]{extdash}
\hyphenation{Aphae-no-gas-ter cock-e-rel-li bio-mim-ic-ry bio-in-spi-ra-tion}

\usepackage{graphicx}
\usepackage{capt-of}
\usepackage{booktabs}
\usepackage{psfrag}
\usepackage{epstopdf}
\ifpdf\DeclareGraphicsExtensions{.jpg,.png,.pdf}\fi

\usepackage[bottom]{footmisc} 

%
%
%
\makeatletter
%
%
\def\citet{\@ifstar{\citetstar}{\citetnostar}}
\def\Citet{\@ifstar{\Citetstar}{\Citetnostar}}
%
\def\citetnostar{\@ifnextchar[{\squarecitet}{\simplecitet}}
\def\squarecitet[#1]{\@ifnextchar[{\twocitet[#1]}{\onecitet[#1]}}
\def\Citetnostar{\@ifnextchar[{\squareCitet}{\simpleCitet}}
\def\squareCitet[#1]{\@ifnextchar[{\twoCitet[#1]}{\oneCitet[#1]}}
%
\def\citetstar{\@ifnextchar[{\squarecitetstar}{\simplecitetstar}}
\def\squarecitetstar[#1]{\@ifnextchar[{\twocitetstar[#1]}{\onecitetstar[#1]}}
\def\Citetstar{\@ifnextchar[{\squareCitetstar}{\simpleCitetstar}}
\def\squareCitetstar[#1]{\@ifnextchar[{\twoCitetstar[#1]}{\oneCitetstar[#1]}}
\makeatother
%
%
\def\simplecitet#1{\citeauthor{#1}~\citeyearpar{#1}}
\def\onecitet[#1]#2{\citeauthor{#2}~\citeyearpar[#1]{#2}}
\def\twocitet[#1][#2]#3{\citeauthor{#3}~\citeyearpar[#1][#2]{#3}}
%
\def\simplecitetstar#1{\citeauthor*{#1}~\citeyearpar{#1}}
\def\onecitetstar[#1]#2{\citeauthor*{#2}~\citeyearpar[#1]{#2}}
\def\twocitetstar[#1][#2]#3{\citeauthor*{#3}~\citeyearpar[#1][#2]{#3}}
%
\def\simpleCitet#1{\Citeauthor{#1}~\citeyearpar{#1}}
\def\oneCitet[#1]#2{\Citeauthor{#2}~\citeyearpar[#1]{#2}}
\def\twoCitet[#1][#2]#3{\Citeauthor{#3}~\citeyearpar[#1][#2]{#3}}
%
\def\simpleCitetstar#1{\Citeauthor*{#1}~\citeyearpar{#1}}
\def\oneCitetstar[#1]#2{\Citeauthor*{#2}~\citeyearpar[#1]{#2}}
\def\twoCitetstar[#1][#2]#3{\Citeauthor*{#3}~\citeyearpar[#1][#2]{#3}}

\usepackage[cmex10]{mathtools}             
\interdisplaylinepenalty=2500              
\usepackage{amsfonts,amssymb}
\usepackage{mathrsfs}                      


\input{_preamble_hyperref}

\title{\LARGE \bf
Confinement Control of Double Integrators using Partially Periodic Leader Trajectories
}

\author{Karthik Elamvazhuthi, Sean Wilson,  and  Spring Berman
\thanks{*This research was supported by NSF Award CMMI-1436960 and by DARPA Young Faculty Award D14AP00054.}
\thanks{The authors are with the School for Engineering of Matter, Transport and Energy, Arizona State University, Tempe, AZ 85287, USA.  E-mail: {\tt\small \{karthikevaz, sean.t.wilson, spring.berman\}@asu.edu}}%
}

\begin{document}

\maketitle
\thispagestyle{empty}
\pagestyle{empty}

\begin{abstract}
We consider a multi-agent confinement control problem in which a single leader has a purely repulsive effect on follower agents with double-integrator dynamics.  By decomposing the leader's control inputs into periodic and aperiodic components, we show that the leader can be driven so as to guarantee confinement of the followers about a time-dependent trajectory in the plane. We use tools from averaging theory and an input-to-state stability type argument to derive conditions on the model parameters that guarantee confinement of the followers about the trajectory. For the case of a single follower, we show that if the follower starts at the origin, then the error in trajectory tracking can be made arbitrarily small depending on the frequency of the periodic control components and the rate of change of the trajectory. We validate our approach using simulations and experiments with a small mobile robot.
\end{abstract}

\section{Introduction}
In this work, we present a leader-follower control strategy for multi-robot applications such as exploration, environmental monitoring, disaster response, and targeted drug delivery at the micro-nanoscale.  We consider a scenario in which a leader agent must confine a group of follower agents to a certain region around a target trajectory while steering them along this trajectory.  For example, a leader agent with sophisticated sensing, localization, and planning capabilities may be required to herd a group of followers, which lack these capabilities, through an environment with obstacles.




The problem that we consider is closely related to containment control problems \cite{ren2010distributed} in the multi-agent systems literature. In these problems, the objective is to design interaction rules between leader and follower agents so that the followers are eventually contained in the convex hull spanned by the leaders.  In contrast to most other work on such problems, the system that we consider is naturally unstable for steady-state control inputs due to the repulsive nature of the interaction between the leader and the followers.  Hence, a stabilization mechanism is needed to achieve confinement of the followers. One way to stabilize the system is to incorporate feedback on the followers' positions and velocities into the leader trajectories.  Our control approach does not require this feedback, and instead uses {\it open-loop oscillatory strategies} to (practically) stabilize the follower positions about near-zero velocity conditions. 




The use of oscillatory inputs in control theory has a rich history. There has especially been extensive work on motion planning of driftless systems.  In this work, sinusoidal inputs of appropriately chosen frequencies are used to produce excitations along different independent directions (corresponding to Lie brackets of system vector fields) in the system's state space \cite{sussmann1991limits,murray1993nonholonomic,chitour2013global}. In fact, for a certain class of systems with a specific geometric structure, solutions of some optimal control problems are of a periodic nature \cite{brockett1982control,sastry1993structure}. Similarly, oscillatory inputs have also played an important role in stabilization problems \cite{baillieul1993stable,bellman1986vibrational,teel1995non,coron1992global,sarychev2001stability}. While most of these works have considered the control of systems with a range of rest conditions (equilibrium points), there has also been some work on practical stabilization of systems for which no such rest condition exists \cite{moreau2000practical}.
The application of oscillatory inputs can also be found in the physics literature. Kapitza's pendulum is an example of a system whose unstable equilibrium is rendered stable when subject to oscillatory vibrations \cite{kapitza1951dynamic}. The method of using oscillatory potentials for trapping charged particles is also well-known \cite{gilary2003trapping,paul1990electromagnetic}. 

Our approach is also motivated by animal foraging and livestock herding behaviors, which have inspired a number of algorithms and multi-robot control strategies \cite{haque2011biologically,piersonbio,bennett2012comparative,lien2004shepherding,colombo2012confinement,bressan2012control}.  The approaches in \cite{haque2011biologically,bressan2012control}, and \cite{colombo2012confinement} are most similar to ours in their use of sinusoidal leader trajectories that introduce repulsion terms into the agent dynamics.  While \cite{haque2011biologically} uses sinusoidal trajectories to mimic the foraging strategies of dolphins, the analysis is based on Snell's law and the agent interactions arise from prey ``bouncing'' off of predators by specular reflection.  In \cite{bressan2012control} and \cite{colombo2012confinement}, the follower population is modeled as a probability density over space and time, and the objective is to control the support of this density function.  The followers are diffusing at a finite speed, and their models are of single-integrator type. These two works derive conditions for confinement to be possible under these assumptions. Attraction-type potentials are also considered in \cite{colombo2012confinement}, and conditions are also derived for the confinement of followers in dimensions greater than two.  Although these works do not consider the trajectory tracking problem, \cite{bressan2012control} does study the existence of control inputs for point-to-point steering problems.   





Other works that consider similar problems use methods based on consensus protocols and multiple leaders \cite{ji2008containment,cao2012distributed}. Multi-agent controllability problems have also received some attention in related contexts \cite{caponigro2015sparse,rahmani2009controllability,tanner2004controllability,piccoli2014control,brockett2010control}. 

\section{Problem Statement}

We consider a system with a single {\it leader agent}, whose position at time $t$ is given by $[x_l(t),y_l(t)]^T$, and $N$ {\it follower agents}, whose positions at time $t$ are $[x^i(t),y^i(t)]^T$, $i=1,...,N$.  We denote the $x$ and $y$ velocity components of follower agent $i$ at time $t$ by $[v^i_x(t),v^i_y(t)]^T$.  The dynamics of the leader and followers are defined by the following system of equations:
\begin{eqnarray}
\label{eq:MainSys}
\dot{x}_l &=& u_x \\ \nonumber
\dot{y}_l &=& u_y \\ \nonumber
\dot{x}^i &=& v^i_x  \\ \nonumber
\dot{v}^i_x &=& \frac{A (x^i-x_l)}{[(x^i-x_l)^2+(y^i-y_l)^2]^{\alpha}} - k v^i_x\\ \nonumber
\dot{y}^i &=& v^i_y \\ \nonumber
\dot{v}^i_y &=& \frac{A (y^i-y_l)}{[(x^i-x_l)^2+(y^i-y_l)^2]^{\alpha}} - k v^i_y
\end{eqnarray}
where $A, k, \alpha \in \mathbb{R}^+$ and $i=1,...,N$.  This model produces one-way interaction between the leader and the followers for any non-zero value of $A$.  The interaction potentials are of gravitational type but repulsive in nature. 


We formulate our confinement control problem using the following definition, in which $\|\mathbf{x}\|$ is the Euclidean norm of $\mathbf{x} \in \mathbb{R}^n$:
\begin{definition}
Let $\gamma: [0,1] \rightarrow \mathbb{R}^2$ and $R>0$. Then the system \ref{eq:MainSys} is said to be $R$-confinement controllable about the trajectory $\gamma$ if there exist $T>0$, $u_x:[0,T]\rightarrow \mathbb{R}$, and $u_y:[0,T]\rightarrow \mathbb{R}$ such that the solution of \ref{eq:MainSys} satisfies $\|\gamma(t/T)-[x^i(t),y^i(t)]^T\|<R$ and $\|\gamma(t/T)-[x_l(t),y_l(t)]^T\| \geq R$ for all $i  \in \lbrace 1,...,N \rbrace $ and all $t \in [0,T]$.
\end{definition}
Our objective is to derive conditions on the parameters $A$, $k$, and $\alpha$ and on the trajectory $\gamma(t)$ and its time derivative $\dot{\gamma}(t)$ which guarantee that system \ref{eq:MainSys} is $R$-confinement controllable about the trajectory $\gamma$.
The requirement that the leader must maintain a minimum distance from the trajectory differentiates our approach from other confinement strategies that use a single leader, which typically employ attraction-type interactions between the leader and followers.  Moreover, this distance requirement provides a way to prevent collisions between the leader and the followers without explicitly modeling the physical dimensions of the agents.





System \ref{eq:MainSys} can be expressed in the affine control form $\dot{\mathbf{x}} = \mathbf{f}(\mathbf{x}) +\sum_i u_i \mathbf{g}_i(\mathbf{x})$, which enables the application of Lie algebraic conditions of geometric control to study its controllability properties \cite{sontag2013mathematical, sussmann1987general}. However, the system has no fixed point for any time-independent set of control inputs $\lbrace u_i \rbrace$. Therefore, one cannot conclude much more than accessibility at points of interest. While Lie algebraic conditions for controllability do exist for the case of non-rest conditions, they require the assumption of global bounds on the vector fields \cite{godhavn1999steering}, which is not a valid assumption for system \ref{eq:MainSys}. Alternatively, certain results on controllability about trajectories can be applied if the admissibility of the trajectories can be characterized beforehand \cite{bianchini1993controllability}. 

We take a different approach by using {\it averaging theory} to show that the followers can be stabilized about the target trajectory with a particular choice of control inputs.  Averaging theory is built on the principle that the behavior of a dynamical system with a rapidly oscillating vector field can be accurately represented by the averaged system behavior over a time interval that is dependent on a perturbation parameter $\epsilon$.  For small enough $\epsilon$, one can relate the solutions of the averaged system, a set of autonomous differential equations, to the solutions of the original non-autonomous differential equations.  Moreover, the stability of the solutions is easier to characterize in the averaged system than in the original system.  We apply averaging analysis to our system in the next section.









\section{Averaging Analysis}

In the forthcoming analysis, we prove that system \ref{eq:MainSys} is $R$-confinement controllable about a trajectory $\gamma(t)$ under suitable bounds on $\gamma(t)$ and $\dot{\gamma}(t)$. We also demonstrate that if a follower starts at the origin, then its distance from the trajectory can be made arbitrarily small while the leader maintains a distance $R$ from the trajectory.  To simplify the analysis, we consider the case where $R=1$ and $N=1$, i.e., there is a single follower. We show that the analysis also holds for any follower population $N$, since there are no interactions between the followers.  The results for general $R>0$ follow trivially from an appropriate scaling of the parameter $A$.  








Denote the initial position of the leader by $[x_l(0),y_l(0)]^T =[1,0]^T$, and let $u_x(t) = -\omega \sin \omega t$ and $u_y(t) = -\omega \cos \omega t$ for some $\omega >0$. With these control inputs, one can explicitly solve for $x_l(t)$ and $y_l(t)$.  Then system \ref{eq:MainSys} can be expressed in the following reduced form,
\begin{eqnarray}
\label{eq:RedSys}
\dot{x} &=& \hspace{2mm} v_x  \\ \nonumber
\dot{v}_x &=& \frac{A (x-\cos \omega t)}{[(x-\cos \omega t)^2+(y-\sin \omega t))^2]^{\alpha}} - k v_x\\ \nonumber
\dot{y} &=& \hspace{2mm} v_y \\ \nonumber
\dot{v}_y &=& \frac{A (y-\sin \omega t)}{[(x-\cos \omega t)^2+(y-\sin \omega t))^2]^{\alpha}} - k v_y.
\label{eq:MainS2}
\end{eqnarray}

We change the time variable to $\tau = \omega t$ and define $\epsilon = 1/\omega$. Then the solution of \ref{eq:RedSys} satisfies,
\begin{eqnarray}
\frac{dx}{d\tau} &=& \epsilon v_x  \\ \nonumber
\frac{dv_x}{d\tau} &=&  \frac{\epsilon A (x-\cos \tau )}{[(x-\cos \tau )^2+(y-\sin \tau )^2]^{\alpha}} - \epsilon k v_x\\ \nonumber
\frac{dy}{d\tau} &=& \epsilon v_y \\ \nonumber
\frac{dv_x}{d\tau} &=&  \frac{\epsilon A (y-\sin \tau )}{[(x-\cos \tau )^2+(y-\sin \tau )^2]^{\alpha}} - \epsilon k v_y.
\label{eq:CoVRed}
\end{eqnarray}
Now the system can be expressed in the so-called standard averaging form,
\begin{equation}
\frac{d\mathbf{X}}{d\tau} = \epsilon \mathbf{f}(\tau,\mathbf{X}), 
\label{eq:TransRed}
\end{equation}
where $\mathbf{X}=[x, v_x, y, v_y]^T$ and
\begin{equation}
\mathbf{f}(\tau,\mathbf{X}) = 
\begin{bmatrix} 
\hspace{-2mm} &  v_x  \hspace{2mm} \\ 
\hspace{-2mm} & \frac{A (x-\cos \tau )}{[(x-\cos \tau )^2+(y-\sin \tau )^2]^{\alpha}} -  k v_x \hspace{2mm} \\
\hspace{-2mm} & v_y \hspace{2mm} \\
\hspace{-2mm} & \frac{A (y-\sin \tau )}{[(x-\cos \tau )^2+(y-\sin \tau )^2]^{\alpha}} - k v_y \hspace{2mm}
\end{bmatrix}.
\end{equation}
Then we can consider the {\it averaged system}, 
\begin{equation}
\dot{\bar{\mathbf{X}}} = \epsilon \mathbf{f}_{av}(\bar{\mathbf{X}}) \hspace{5mm} \bar{\mathbf{X}}(0) = \bar{\mathbf{X}}_0,
 \label{eq:AvgSys}
\end{equation}
where $\bar{\mathbf{X}} = [\bar{x},\bar{v}_x,\bar{y}_y,\bar{v}_y]^T$ and 
\begin{equation}
\mathbf{f}_{av}(\bar{\mathbf{X}}) = 
\begin{bmatrix}
\hspace{-2mm} & \bar{v}_x \hspace{1mm} \\
\hspace{-2mm} & \frac{1}{2\pi} \int^{2 \pi}_0 \frac{ A (\bar{x}-\cos (s) )}{[(\bar{x}-\cos (s) )^2+(\bar{y}-\sin (s) )^2]^{\alpha}} ds - k \bar{v}_x \hspace{1mm}\\
\hspace{-2mm} & \bar{v}_y \hspace{1mm} \\ 
\hspace{-2mm} &  \frac{1}{2\pi} \int^{2 \pi}_0 \frac{ A (\bar{y}-\sin (s) )}{[(\bar{x}-\cos (s) )^2+(\bar{y}-\sin (s) )^2]^{\alpha}} ds - k \bar{v}_y \hspace{1mm}
\end{bmatrix}.
\end{equation}

\vspace{2mm}

If the origin of the averaged system is an exponentially stable equilibrium point, then there exists a neighborhood of the origin for which solutions of the original system remain bounded and close to the origin for infinite time.  The following lemma gives conditions for exponential stability of the origin in the averaged system. The existence and uniqueness of solutions to the averaged system follow from this result for initial conditions sufficiently close to the origin.

\vspace{2mm}

\begin{lemma}
The origin is an exponentially stable equilibrium point of the averaged system \ref{eq:AvgSys} for all $k>0$ if $A>0$ and $\alpha > 1$.
\end{lemma}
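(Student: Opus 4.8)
The plan is to establish exponential stability via Lyapunov's indirect method: linearize the averaged system at the origin and verify that the Jacobian is Hurwitz. Since the positive scalar $\epsilon$ in \ref{eq:AvgSys} merely rescales time (multiplying every eigenvalue by a positive constant), it suffices to analyze the eigenvalues of the Jacobian of $\mathbf{f}_{av}$ itself.

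First I would confirm that the origin is genuinely an equilibrium. At $\bar{\mathbf{X}}=0$ the denominators collapse to $(\cos^2 s + \sin^2 s)^{\alpha}=1$, so the two integral terms reduce to $\frac{A}{2\pi}\int_0^{2\pi}(-\cos s)\,ds$ and $\frac{A}{2\pi}\int_0^{2\pi}(-\sin s)\,ds$, both of which vanish; the velocity components are zero by construction.

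The core step is evaluating the $4\times 4$ Jacobian of $\mathbf{f}_{av}$ at the origin. Writing $D(s)=(\bar{x}-\cos s)^2+(\bar{y}-\sin s)^2$ and differentiating under the integral sign, the relevant integrals to evaluate are $\int_0^{2\pi}\cos^2 s\,ds = \int_0^{2\pi}\sin^2 s\,ds = \pi$, which produce the diagonal entries, and the cross integral $\int_0^{2\pi}\sin s\cos s\,ds = 0$, which kills the coupling between the $\bar{x}$- and $\bar{y}$-dynamics. The upshot is a block-diagonal Jacobian built from two identical blocks of the form $\begin{bmatrix} 0 & 1 \\ A(1-\alpha) & -k \end{bmatrix}$, reflecting the rotational symmetry of the averaged repulsion at the origin.

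Finally I would apply the Routh--Hurwitz criterion to each block. Its characteristic polynomial is $\lambda^2 + k\lambda + A(\alpha-1)$, whose roots both have strictly negative real part exactly when the coefficients are positive, i.e.\ $k>0$ and $A(\alpha-1)>0$; these hold precisely under the hypotheses $A>0$ and $\alpha>1$. Lyapunov's indirect method then yields local exponential stability. I expect the Jacobian computation to be the main obstacle: one must differentiate the singular-looking integrands carefully and confirm that the off-diagonal coupling integrals vanish by symmetry, since any slip there would destroy the clean decoupled structure that makes the stability condition transparent.
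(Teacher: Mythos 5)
Your proposal is correct and follows essentially the same route as the paper: linearize $\mathbf{f}_{av}$ at the origin, obtain the block-diagonal Jacobian with blocks $\begin{bmatrix} 0 & 1 \\ -A(\alpha-1) & -k \end{bmatrix}$ (the paper relegates the Leibniz-rule computation and the vanishing of the $\sin s \cos s$ cross terms to its Appendix), and conclude exponential stability via the linearization principle. The only cosmetic difference is that you invoke Routh--Hurwitz on the characteristic polynomial $\lambda^2 + k\lambda + A(\alpha-1)$ while the paper writes out the eigenvalues $-\tfrac{k}{2} \pm \tfrac{1}{2}\sqrt{k^2-4A(\alpha-1)}$ explicitly; these are equivalent checks.
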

\begin{proof}
The proof is based on the linearization principle.  The Jacobian map for $\mathbf{f}_{av}(\bar{\mathbf{X}})$ at $\bar{\mathbf{X}} = \mathbf{0}$ is given by (see the Appendix for computations):
\begin{equation}
D\mathbf{f}_{av}(\mathbf{0}) = 
\begin{bmatrix}
0 & 1 & 0 & 0 \\ 
-A(\alpha-1) & -k & 0 & 0 \\ 
0 & 0 & 0 & 1 \\ 
0 & 0 & -A(\alpha-1) & -k
\end{bmatrix}.
\label{eq:AvJacobian}
\end{equation}
The spectrum of $D\mathbf{f}_{av}(\mathbf{0})$ can be computed to be
\begin{equation}
\sigma(D\mathbf{f}_{av}(\mathbf{0})) =  -\frac{k}{2} \pm \frac{1}{2}(k^2 - 4A(\alpha-1))^{\frac{1}{2}}.
\label{eq:Avspec}
\end{equation}
Then the result follows from \cite{khalil2002nonlinear}[Corollary 4.3]. 
\end{proof} 

\vspace{2mm}

The following proposition demonstrates that this result can be used to obtain estimates for the solutions of system \ref{eq:TransRed}.  The proposition gives bounds on solutions of general systems in which the vector field can be decomposed into (1) a highly oscillatory, state-dependent component, and (2) a state-independent component, which is not necessarily periodic, that evolves on a slower time scale.  By converting our system into this form through a coordinate transformation, we can use this result to construct control inputs for the leader to confine followers about a trajectory.  The proof is similar to the proofs of the averaging theorems in \cite{khalil2002nonlinear, guckenheimer1983nonlinear}.  Since our result does not immediately follow from these theorems, we adapt the initial steps of their proofs to our problem.

\begin{proposition}
\label{Main}
Let a function $\mathbf{f}(t,\mathbf{x},\epsilon)$ and its partial derivatives with respect to $(\mathbf{x},\epsilon)$ up to the first order be continuous and bounded for $(t,\mathbf{x},\epsilon) \in [0,\infty) \times U_0 \times [0,\epsilon_0]$ for every compact set $U_0 \subset U$, where $U \subset \mathbb{R}^n$, and some $\epsilon_0>0$. Additionally, let $\mathbf{g} : [0,\infty) \rightarrow \mathbb{R}^n$ be continuous. Suppose that $\mathbf{f}$ is $T$-periodic in $t$ for some $T>0$. Then $\mathbf{x} \equiv \mathbf{x}(t,\epsilon,\mathbf{g})$ and $\mathbf{z} \equiv \mathbf{z}(t)$ evolve according to the following equations:
\begin{equation}
\dot{\mathbf{x}} = \epsilon \mathbf{f}(t,\mathbf{x},\epsilon) + \epsilon \mathbf{g}(t), \hspace{5mm} \mathbf{x}(0,\epsilon,\mathbf{g}) = \mathbf{x}_0,
\label{eq:Gpedic}
\end{equation}
\begin{equation}
\dot{\mathbf{z}} = \epsilon \mathbf{f}_{av}(\mathbf{z}), \hspace{5mm} \mathbf{z}(0) = \mathbf{z}_0,
\label{eq:GenAvg}
\end{equation}
where $\mathbf{f}_{av}(\mathbf{z}) = \frac{1}{T} \int^T_0 \mathbf{f}(\tau,\mathbf{z},0) d \tau$.
Suppose that the origin $\mathbf{z}=\mathbf{0} \in U$ is an exponentially stable equilibrium point of the averaged system \autoref{eq:GenAvg}, $\Omega \subset U$ is a compact subset of its region of attraction, and $\mathbf{x}_0 \in \Omega$.  Then there exist positive constants $k$, $\delta^*$, $\epsilon^*$, $\beta_1$, $\beta_2$, and $\mu$  such that if $\|\mathbf{g}(t)\| \leq \delta$ for all $t \geq 0$, then the condition $\|\mathbf{x}(0,\epsilon,\mathbf{g}) - \mathbf{z}(0)\| \leq \mu$ implies that 
\begin{equation}
\|\mathbf{x}(t,\epsilon,\mathbf{g})-\mathbf{z}(t)\| ~\leq~ k e^{-\lambda \epsilon t} \|\mathbf{x}_0 - \mathbf{z}_0\| + \beta_1 \epsilon + \beta_2 \delta
\label{prop1Eq}
\end{equation}
for all $\epsilon \in (0,\epsilon^*)$, $\delta \in (0,\delta^*)$, and $t \geq 0$.
\end{proposition}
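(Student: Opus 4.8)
The plan is to adapt the exponential-stability version of the averaging theorem in \cite{khalil2002nonlinear}, carrying the additional disturbance $\epsilon\mathbf{g}(t)$ through each step and splitting it off as a separate input. First I would introduce the near-identity change of variables $\mathbf{x} = \mathbf{y} + \epsilon\mathbf{u}(t,\mathbf{y})$ with $\mathbf{u}(t,\mathbf{y}) = \int_0^t [\mathbf{f}(s,\mathbf{y},0) - \mathbf{f}_{av}(\mathbf{y})]\,ds$. Since the integrand is $T$-periodic with zero mean over a period, $\mathbf{u}$ and its first partials are bounded and $T$-periodic on compact subsets of $U$, so for small $\epsilon$ the map $\mathbf{y}\mapsto\mathbf{y}+\epsilon\mathbf{u}(t,\mathbf{y})$ is a diffeomorphism with $\|\mathbf{x}(t)-\mathbf{y}(t)\|\le c_0\epsilon$. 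Using $\partial\mathbf{u}/\partial t = \mathbf{f}(t,\mathbf{y},0) - \mathbf{f}_{av}(\mathbf{y})$ and a first-order Taylor expansion of $\mathbf{f}$ in its last two arguments, the oscillatory $O(\epsilon)$ terms collapse into $\mathbf{f}_{av}$ while the residual time dependence is pushed to order $\epsilon^2$, giving
\begin{equation}
\dot{\mathbf{y}} = \epsilon\mathbf{f}_{av}(\mathbf{y}) + \epsilon\mathbf{g}(t) + \epsilon^2\mathbf{q}(t,\mathbf{y},\epsilon),
\end{equation}
where $\|\mathbf{q}\|\le M$ on the relevant compact set whenever $\|\mathbf{g}\|\le\delta^*$. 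The crucial point is that $\mathbf{g}$ passes through the transformation unchanged at order $\epsilon$, so the $\mathbf{y}$-dynamics are the averaged system perturbed by an input of size $\epsilon\delta + \epsilon^2 M$.

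Next I would study the error $\mathbf{e}=\mathbf{y}-\mathbf{z}$, which by the mean value theorem satisfies $\dot{\mathbf{e}} = \epsilon J(t)\mathbf{e} + \epsilon\mathbf{d}(t)$, where $J(t)=\int_0^1 D\mathbf{f}_{av}(\mathbf{z}+s\mathbf{e})\,ds$ and $\|\mathbf{d}(t)\|\le \delta + \epsilon M$. Exponential stability of the origin of \autoref{eq:GenAvg} makes $D\mathbf{f}_{av}(\mathbf{0})$ Hurwitz, so on a sufficiently small ball $B_\rho$ the matrix $J(t)$ stays within a neighborhood of $D\mathbf{f}_{av}(\mathbf{0})$ on which the Lyapunov equation for $D\mathbf{f}_{av}(\mathbf{0})$ supplies a common quadratic Lyapunov function; hence the homogeneous part $\dot{\mathbf{e}}=\epsilon J(t)\mathbf{e}$ is exponentially stable and its transition matrix obeys $\|\Phi(t,s)\|\le k e^{-\lambda\epsilon(t-s)}$. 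Variation of constants then yields $\|\mathbf{e}(t)\|\le k e^{-\lambda\epsilon(t-t_1)}\|\mathbf{e}(t_1)\| + (k/\lambda)(\delta+\epsilon M)$ once both trajectories lie in $B_\rho$, which is exactly the claimed form with residual terms $\beta_1\epsilon$ and $\beta_2\delta$, $\beta_2=k/\lambda$ and $\beta_1$ absorbing $kM/\lambda$ together with the near-identity gap $c_0$.

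To cover a general initial condition $\mathbf{x}_0\in\Omega$ rather than only one inside $B_\rho$, I would split the time axis at a time $t_1=O(1/\epsilon)$, namely the uniform time the averaged flow needs to steer $\Omega$ into $B_{\rho/2}$ (finite and uniform because $\Omega$ is a compact subset of the region of attraction; the converse Lyapunov function for the exponentially stable origin certifies both the entry and the forward invariance of the relevant sublevel set). On $[0,t_1]$ a Gronwall estimate using the Lipschitz constant of $\mathbf{f}_{av}$ gives $\|\mathbf{e}(t)\|\le C_1\|\mathbf{e}(0)\| + C_2(\delta+\epsilon M)$, since $e^{\epsilon L t}$ is bounded by a constant on this interval; enlarging $k$ so that $C_1\le k e^{-\lambda\epsilon t}$ for all $t\le t_1$ lets this transient bound be dominated by the target inequality, while choosing $\mu$, $\delta^*$, $\epsilon^*$ small keeps both $\mathbf{y}$ and $\mathbf{z}$ inside $B_\rho$ for $t\ge t_1$ so that the previous paragraph applies. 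Finally $\|\mathbf{x}-\mathbf{z}\|\le\|\mathbf{x}-\mathbf{y}\|+\|\mathbf{e}\|\le c_0\epsilon+\|\mathbf{e}\|$ and $\|\mathbf{e}(0)\|\le\|\mathbf{x}_0-\mathbf{z}_0\|+c_0\epsilon$ translate everything back to the original coordinates.

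The step I expect to be the main obstacle is the transient phase: ensuring that the perturbed trajectory $\mathbf{y}(t)$ never leaves the region of attraction while it is still far from the origin, uniformly over all $\mathbf{x}_0\in\Omega$ and all admissible $\mathbf{g}$. This is precisely what forces the smallness thresholds $\mu$, $\delta^*$, and $\epsilon^*$, and it requires an invariance argument that ties the size $O(\delta+\epsilon)$ of the residual ball to sublevel sets of the Lyapunov function inside $B_\rho$. Getting this interplay of constants to close consistently, rather than any individual estimate, is the delicate part of the proof.
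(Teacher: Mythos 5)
Your proposal is correct, and its first half is essentially the paper's proof verbatim: the same near-identity change of variables $\mathbf{x}=\mathbf{y}+\epsilon\mathbf{u}(t,\mathbf{y})$ with $\mathbf{u}$ the integral of the zero-mean periodic part, the same conclusion that $\mathbf{y}$ obeys the averaged dynamics perturbed by $\epsilon\mathbf{g}(t)+\epsilon^{2}\mathbf{q}$, and the same triangle inequality to return to $\mathbf{x}$ at the end. Where you diverge is in how the central estimate is established. The paper rescales time to $s=\epsilon t$ and then simply cites Theorem 9.1 of \cite{khalil2002nonlinear} to produce the bound, together with Lemma 2.8.3 of \cite{sanders2007averaging} for $\mathbf{x}-\mathbf{y}=O(\epsilon)$; you instead re-derive that estimate from first principles: error dynamics for $\mathbf{e}=\mathbf{y}-\mathbf{z}$ via the mean value theorem, a common quadratic Lyapunov function built from the Hurwitz matrix $D\mathbf{f}_{av}(\mathbf{0})$ (legitimate, since exponential stability of a $C^{1}$ vector field forces its linearization to be Hurwitz), variation of constants for the local bound, and a Gronwall-controlled transient of length $O(1/\epsilon)$ to carry arbitrary initial conditions in $\Omega$ into the local ball. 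Your longer route buys self-containedness and, in fact, explicit treatment of two points the citation compresses: Khalil's perturbation results bound the norm of the perturbed solution relative to the origin, so comparing two solutions issued from distinct nearby initial conditions requires precisely the error-dynamics reformulation you perform; and those results are local, so the passage from the compact set $\Omega$ into the region where the Lyapunov estimates apply needs your two-phase argument. The bootstrap you flag at the end (the transition-matrix bound presupposes that both trajectories remain in $B_{\rho}$, which is what the bound is meant to guarantee) is indeed the genuinely delicate step, but the resolution you name, forward invariance of a Lyapunov sublevel set once $\mu$, $\delta^{*}$, $\epsilon^{*}$ are small, is the standard and correct one, so this is a matter of execution rather than a gap.
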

\begin{proof}
Define the functions
\begin{equation}
\mathbf{h}(t,\mathbf{x}) = \mathbf{f}(t,\mathbf{x},0) - \mathbf{f}_{av}(\mathbf{x})
\end{equation}
and
\begin{equation}
\mathbf{u}(t,\mathbf{x}) = \int^t_0 \mathbf{h}(\tau,\mathbf{x}) d\tau.
\end{equation}
Since $\mathbf{h}(t,\mathbf{x})$ is $T$-periodic in $t$ and has zero mean,  the function $\mathbf{u}(t,\mathbf{x})$ is $T$-periodic in $t$. Hence, $\mathbf{u}(t,\mathbf{x})$ is bounded for all $(t,\mathbf{x}) \in [0,\infty) \times \Omega$. Moreover, $\partial \mathbf{u} / \partial t$ and $D_{\mathbf{x}} \mathbf{u}$ are given by
\begin{eqnarray}
\frac{\partial \mathbf{u}}{\partial t} &=& \mathbf{h}(t,\mathbf{x}), \nonumber \\ 
D_{\mathbf{x}} \mathbf{u} &=& \int^t_0 D_{\mathbf{x}} \mathbf{h} (\tau,\mathbf{x}) d \tau.
\end{eqnarray}

Consider the change of variables
\begin{equation}
\mathbf{x}= \mathbf{y} + \epsilon \mathbf{u}(t,\mathbf{y}).
\label{eq:CoV}
\end{equation}
Differentiating both sides of \autoref{eq:CoV} with respect to $t$, we obtain
\begin{equation}
\dot{\mathbf{x}} = \dot{\mathbf{y}} + \epsilon \frac{\partial \mathbf{u}}{\partial t}(t,\mathbf{y}) + \epsilon D_{\mathbf{y}} \mathbf{u}(t,\mathbf{y}) \dot{\mathbf{y}}. 
\label{eq:CoVdxdt}
\end{equation}
Substituting \autoref{eq:Gpedic} for $\dot{\mathbf{x}}$ into \autoref{eq:CoVdxdt}, the new state variable $\mathbf{y}$ satisfies the equation
\begin{eqnarray}
\big [\mathbf{I} + \epsilon D_{\mathbf{y}} \mathbf{u} \big ] \dot{\mathbf{y}} &=& \epsilon \mathbf{f}(t,\mathbf{y} + \epsilon \mathbf{u},\epsilon) + \epsilon \mathbf{g}(t) - \epsilon \frac{\partial \mathbf{u}}{ \partial t}  \nonumber \\ 
&=& \epsilon \mathbf{f}(t,\mathbf{y} + \epsilon \mathbf{u},\epsilon) + \epsilon \mathbf{g}(t) - \epsilon \mathbf{f}(t,\mathbf{y},0) \nonumber \\ 
&& +  \epsilon  \mathbf{f}_{av}(\mathbf{y}). \nonumber
\end{eqnarray}
The assumption of bounds on the partial derivatives of $\mathbf{f}$ over $(t,\mathbf{x}) \in [0, \infty) \times U_0$ imply that for small enough $\epsilon$, $\big [\mathbf{I} + \epsilon D_{\mathbf{y}} \mathbf{u} \big ]$ is invertible.  Hence, the state variable $\mathbf{y}$ satisfies
\begin{eqnarray}
\dot{\mathbf{y}} &=& \epsilon  \mathbf{f}_{av}(\mathbf{y}) +\epsilon \mathbf{g}(t) + \epsilon \big[\epsilon D_{\mathbf{y}}\mathbf{f}(t,\mathbf{y},0)\mathbf{u}+ \epsilon \frac {\partial \mathbf{f}}{\partial \epsilon}(t,\mathbf{y},0) \nonumber \\ 
& \hspace{5mm}& -  \epsilon D_{\mathbf{y}}\mathbf{u}\mathbf{f}_{av}(\mathbf{y}) -  \epsilon D_{\mathbf{y}}\mathbf{u}\mathbf{g}(t) \big] + O(\epsilon ^3) \nonumber \\ 
&\vcentcolon =& \epsilon  \mathbf{f}_{av}(\mathbf{y}) +\epsilon \mathbf{g}(t) + \epsilon^2 \mathbf{q}(t,\mathbf{y},\epsilon,\mathbf{g}), \label{eq:dydtOrig} 
\end{eqnarray}
where $\mathbf{q}(t,\mathbf{y},\epsilon,\mathbf{g})$ is not necessarily periodic. From the assumptions on $\mathbf{f}(t,\mathbf{x},\epsilon)$ and $\mathbf{g}(t)$, $\mathbf{q}(t,\mathbf{y},\epsilon,\mathbf{g})$ is continuous and bounded for $(t,\mathbf{x},\epsilon) \in [0,\infty) \times \Omega \times [0,\epsilon_0]$.

Now consider the change of variables $s = \epsilon t$. Then the solutions of \autoref{eq:dydtOrig} and \autoref{eq:GenAvg} satisfy
\begin{equation}
\frac{d\mathbf{y}}{ds} = \mathbf{f}_{av}(\mathbf{y}) + \mathbf{g}(s) +\epsilon \mathbf{q}(s/\epsilon,\mathbf{y}, \epsilon,\mathbf{g})
\end{equation}
and 
\begin{equation}
\frac{d\mathbf{z}}{ds} = \mathbf{f}_{av}(\mathbf{z}), \label{eq:dzds}
\end{equation}
respectively.  By assumption, the origin $\mathbf{z} = \mathbf{0}$ is an exponentially stable equilibrium point of \autoref{eq:dzds}. Then by \cite{khalil2002nonlinear}[Theorem 9.1], there exist positive constants $\breve{\epsilon}^*$, $\delta^*$, $\mu$, $\breve{\beta}_1$, $\beta_2$, $\delta$, $k$, and $\lambda$ such that the conditions $\|\mathbf{g}(s)\| < \delta$ and $\|\mathbf{y}_0 - \mathbf{z}_0\| \leq \mu$ imply that,  
\begin{equation}
\|\mathbf{y}(s)-\mathbf{z}(s)\| ~\leq~ k e^{- \lambda s} \|\mathbf{y}_0 - \mathbf{z}_0\| + \breve{\beta}_1 \epsilon + \beta_2 c  
\label{eq:Pestimate}
\end{equation}
for all $\epsilon \in (0,\breve{\epsilon}^*)$, $\delta \in (0,\delta^*)$, and $s \geq 0$.  For small enough $\epsilon$, the transformation \ref{eq:CoV} is a one-to-one mapping from $\Omega$ to its image for all $t \geq 0$ and $\epsilon \in [0,\epsilon_{o}]$ for some $\epsilon_{o} > 0$ \cite{sanders2007averaging}[Lemma 2.8.3]. Therefore, $\mathbf{x}-\mathbf{y} = O(\epsilon)$. Hence, it follows from the triangle inequality and the estimate \ref{eq:Pestimate} that there exist positive constants $\delta^*$, $\epsilon^*$, $\mu$, and  $c$  such that the conditions $\|\mathbf{g}(t)\| \leq \delta$ and $\|\mathbf{x}(0,\epsilon,\mathbf{g}) - \mathbf{z}(0)\| \leq \mu$ imply that \autoref{prop1Eq} is true for all $\epsilon \in (0,\epsilon^*)$, $\delta \in (0,\delta^*)$, and $t \geq 0$. 
\end{proof}

\vspace{2mm}

Note that this result is not the same as general averaging, which is a generalization of the above proposition for the case where $\mathbf{g}= \mathbf{0}$ and $\mathbf{f}(t,\mathbf{x},\epsilon)$ is required to be only almost-periodic in $t$ \cite{khalil2002nonlinear, sanders2007averaging}, since we have explicit bounds on the solution of \autoref{eq:GenAvg} that depend on $\epsilon$ as well as $\delta$. One can impose the additional assumption that $\mathbf{f}(t,\mathbf{x},\epsilon)$ has partial derivatives up to the second order. Under this assumption, $\mathbf{g}=\mathbf{0}$ implies the existence of exponentially stable periodic orbits that approach the equilibrium point of the averaged system as $\epsilon$ tends to $0$. See, for example, \cite{guckenheimer1983nonlinear}. In this case, one can interpret the result as ``uniform'' input-to-state stability of the system about $\epsilon$-dependent exponentially stable $T$-periodic orbits. Our result can also be seen as a special case of the results in \cite{nevsic2001input}, where input-to-state stability of general time-varying systems via averaging is analyzed using the notion of strong and weak averages.

In the next two results, we use Proposition \ref{Main} to show the existence of leader control inputs that can locally stabilize the followers about general trajectories in the plane that start at the origin. We consider leader inputs with an oscillatory component, as in \autoref{eq:MainS2}, and an additional component that depends on the target trajectory to be tracked.  Using a coordinate transformation, we show that this system can be converted to the form in \autoref{eq:GenAvg}, and hence we can obtain bounds on the distances between the follower positions and the target trajectory. 

\noindent{\it Remark on notation:}  We say that a function $\mathbf{f}:U \rightarrow \mathbb{R}$, where $U \subset \mathbb{R}^n$, is in $C^1(U)$ if it has continuous derivatives up to order $1$. 

\vspace{2mm}

\begin{lemma}
Suppose that $\gamma_1$ and $\gamma_2$ are elements in $C^1([0,1])$ such that $\gamma_1(0) = \gamma_2(0)=0$, and define the trajectory $\mathbf{\gamma}(t) = (\gamma_1(t),\gamma_2(t))$. Additionally, let $\gamma_1^\delta$ and  $\gamma_2^\delta$ be elements in $C^1([0,1/ \delta])$ such that $\gamma^\delta_1(t) = \gamma_1(\delta t)$ and $\gamma^\delta_2(t) = \gamma_2(\delta t)$ for each $t \in [0,1/\delta]$, where $\delta>0$.  Define $\mathbf{\gamma}^\delta(t) = (\gamma_1^\delta(t),\gamma^\delta_2(t))$.  Let $\mathbf{X}=[x, v_x, y, v_y]^T$ and define $\mathbf{F}(t,\mathbf{X},\omega,\mathbf{\gamma}^\delta)$ as
\begin{equation}
\mathbf{F}(t,\mathbf{X},\omega,\mathbf{\gamma}^\delta) = 
\begin{bmatrix}
\hspace{-2mm} & v_x \hspace{1mm} \\
\hspace{-2mm} & \frac{A (x-\cos \omega t - \gamma^\delta_1(t))}{[(x-\cos \omega t- \gamma^\delta_1(t) )^2+(y-\sin \omega t - \gamma^\delta_2(t))^2]^{\alpha}} -  k v_x \hspace{1mm} \\
\hspace{-2mm} & v_y \hspace{1mm} \\
\hspace{-2mm} &  \frac{A (y-\sin \omega t- \gamma^\delta_2(t) )}{[(x-\cos \omega t- \gamma^\delta_1(t) )^2+(y-\sin \omega t - \gamma^\delta_2(t))^2]^{\alpha}} - k v_y \hspace{1mm} \\
\end{bmatrix}
\label{eq:Fdefinition}
\end{equation}
where $A>0$, $k>0$, and $\alpha>1$. Then consider the system 
\begin{equation}
\dot{\mathbf{X}} = \mathbf{F}(t,\mathbf{X},\omega,\mathbf{\gamma}^\delta), \hspace{5mm} \mathbf{X}(0) = \mathbf{X}_0.
\label{eq:PreEqX}
\end{equation}
There exist positive constants $r_1>0$, $\omega_1 > 0$, and $\delta_1 > 0$, which depend on $A$, $k$, and $\alpha$, such that $\omega >\omega_1$, $\delta >\delta_1$ and $\|\mathbf{X}_0\| \leq  r_1$ implies that the solution $\mathbf{X}(t)$ to \autoref{eq:PreEqX} is defined over the interval $t \in [0,1/\delta]$ and that
\begin{equation}
\|(x(t),y(t) - (\gamma^\delta_1(t),\gamma^\delta_2(t)) \| < 1   
\label{eq:estimate_1}
\end{equation}
for all $t \in [0,1/\delta]$.
Moreover, if $\mathbf{X}_0 = \mathbf{0}$,  then for each $\nu > 0$ there exist $\omega_2 > 0$ and $\delta_2 > 0$ such that for all $\omega > \omega_2$ and  $\delta \in (0,\delta_2)$,
\begin{equation}
\|(x(t),y(t) - (\gamma^\delta_1(t),\gamma^\delta_2(t)) \| < \nu
\label{eq:estimate_nu}
\end{equation}
for all $t \in [0,1/\delta]$.
\label{ShortLemma}
\end{lemma}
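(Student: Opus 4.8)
The plan is to reduce \autoref{ShortLemma} to \autoref{Main} by a time-dependent change of coordinates that places the follower in a frame centered at the slowly drifting target $\mathbf{\gamma}^\delta$, followed by the rescaling $\tau=\omega t$, $\epsilon=1/\omega$ already used to obtain \autoref{eq:TransRed}. Concretely, I would set $\tilde x = x-\gamma_1^\delta(t)$ and $\tilde y = y-\gamma_2^\delta(t)$ while leaving the velocities $v_x,v_y$ unchanged. Leaving the velocities untouched is deliberate: differentiating $\tilde x$ produces only $\dot\gamma_1^\delta(t)=\delta\dot\gamma_1(\delta t)$, so the transformation needs nothing beyond $\mathbf{\gamma}\in C^1$, whereas also shifting the velocities would introduce $\ddot{\gamma}^\delta$, which need not exist under the stated regularity. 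In the new coordinates the displacement from the leader is exactly $(\tilde x-\cos\omega t,\ \tilde y-\sin\omega t)$, so after rescaling time the system takes the standard form $\tfrac{d\mathbf{X}}{d\tau}=\epsilon\mathbf{f}(\tau,\mathbf{X},\epsilon)+\epsilon\mathbf{g}(\tau)$ required by \autoref{Main}, where $\mathbf{f}$ is the $2\pi$-periodic repulsion-plus-damping field and $\mathbf{g}(\tau)=(-\delta\dot\gamma_1(\delta\epsilon\tau),\,0,\,-\delta\dot\gamma_2(\delta\epsilon\tau),\,0)^T$ is the state-independent forcing that carries the target motion.

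The key structural point is that $\tfrac{1}{2\pi}\int_0^{2\pi}\mathbf{f}(\tau,\mathbf{X},0)\,d\tau$ is precisely the averaged field $\mathbf{f}_{av}$ of \autoref{eq:AvgSys}, so the preceding lemma on the averaged system applies and $\mathbf{0}$ is an exponentially stable equilibrium for $A>0$, $\alpha>1$. I would then fix the open set $U=\{(\tilde x,v_x,\tilde y,v_y):\tilde x^2+\tilde y^2<1\}$, on which $\mathbf{f}$ and its first derivatives are continuous and bounded on compacta: when $\tilde x^2+\tilde y^2<1$ the denominator $(\tilde x-\cos\tau)^2+(\tilde y-\sin\tau)^2$ is bounded below by $(1-\sqrt{\tilde x^2+\tilde y^2})^2>0$, so the singularity (follower meeting leader) is excluded exactly on the confinement region. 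Since $\mathbf{\gamma}\in C^1([0,1])$, on the compact interval $[0,1]$ one has $\|\mathbf{g}(\tau)\|\le\delta M$ with $M=\max_{[0,1]}\|\dot{\mathbf{\gamma}}\|$, a bound independent of $\epsilon$; the mild $\epsilon$-dependence of $\mathbf{g}$ through $\delta\epsilon\tau$ is harmless because only this uniform sup-bound enters \autoref{Main}. Taking the averaged trajectory to start at $\mathbf{z}_0=\mathbf{0}$ gives $\mathbf{z}(\tau)\equiv\mathbf{0}$, and \autoref{Main} yields
\begin{equation}
\|\mathbf{X}(\tau)\|\ \le\ k\,e^{-\lambda\epsilon\tau}\,\|\mathbf{X}_0\|+\beta_1\epsilon+\beta_2\,\delta M
\label{eq:planbound}
\end{equation}
for all $\tau\ge0$, which controls the tracking error because $\|(x-\gamma_1^\delta,\,y-\gamma_2^\delta)\|\le\|\mathbf{X}(\tau)\|$.

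Both conclusions then follow by allocating the three terms of \eqref{eq:planbound}. For \autoref{eq:estimate_1} I would pick $r_1\le\mu$ small enough that $k r_1<\tfrac13$ and that $\|\mathbf{X}_0\|\le r_1$ places $\mathbf{X}_0$ inside a compact ball $\Omega\subset U$ contained in the region of attraction, take $\omega_1$ with $\beta_1/\omega_1<\tfrac13$, and restrict $\delta$ so that $\beta_2\,\delta M<\tfrac13$ (i.e.\ $\delta$ below a threshold determined by $A$, $k$, $\alpha$ through $\beta_2$ and by $M$); then the right-hand side of \eqref{eq:planbound} stays below $1$ for all $\tau\ge0$, hence on $t\in[0,1/\delta]$. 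For \autoref{eq:estimate_nu} the hypothesis $\mathbf{X}_0=\mathbf{0}$ annihilates the transient term, leaving $\beta_1\epsilon+\beta_2\,\delta M$; choosing $\omega_2>2\beta_1/\nu$ and $\delta_2=\nu/(2\beta_2 M)$ forces this sum below $\nu$ for all $\omega>\omega_2$ and $\delta\in(0,\delta_2)$.

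The step I expect to be the main obstacle is the existence claim over the full horizon: \autoref{Main} presumes the solution stays in $U$, yet existence on all of $[0,1/\delta]$ is exactly what we must establish, and $\mathbf{f}$ is singular on $\partial U$. I would close this loop with a continuation argument on the maximal interval of existence: the estimate \eqref{eq:planbound} keeps $\tilde x^2+\tilde y^2<1$, hence confines the trajectory to a fixed compact subset of $U$ on which $\mathbf{f}$ is bounded with bounded derivative, so the solution can neither reach the singular set nor blow up before $t=1/\delta$, which in turn validates the hypotheses under which \eqref{eq:planbound} was derived. The remaining work is routine bookkeeping: checking that the $\epsilon$-dependence of $\mathbf{g}$ and the finite horizon leave the constants of \autoref{Main} intact, and verifying that the average of the transformed field is indeed $\mathbf{f}_{av}$ so that the exponential stability from the preceding lemma transfers verbatim.
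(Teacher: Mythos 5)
Your proposal is correct and follows essentially the same route as the paper's own proof: the change of variables $\mathbf{E} = \mathbf{X} - [\gamma^\delta_1(t),0,\gamma^\delta_2(t),0]^T$ (positions only, so that only $\dot{\gamma}$ appears), the rescaling $\tau=\omega t$, $\epsilon=1/\omega$ to reach the forced-averaging form, and an application of Proposition~\ref{Main} together with the observation that the slow-time dependence of $\mathbf{g}$ is harmless because only its sup bound enters. You additionally spell out details the paper leaves implicit (the domain excluding the leader singularity, the three-way allocation of the bound, the continuation argument for existence on $[0,1/\delta]$), and note that, like the paper's proof, your argument in fact yields the estimate for $\delta$ \emph{below} a threshold, consistent with Proposition~\ref{Main}'s requirement $\delta\in(0,\delta^*)$ (the condition ``$\delta>\delta_1$'' in the lemma statement appears to be a typo).
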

\begin{proof}
Consider the change of variables $\mathbf{E} = \mathbf{X} - [\gamma^\delta_1(t),0,\gamma^\delta_2(t),0]^T$. Then we have the initial value problem 
\begin{equation}
\dot{\mathbf{E}} = \mathbf{F}(t,\mathbf{E},\omega,\mathbf{0}) + \delta \mathbf{g}(t), \hspace{5mm} \mathbf{E}(0) = \mathbf{X}_0,
\label{eq:ErSys}
\end{equation}
where $\mathbf{g}(t) = [\dot{\gamma}_1(\delta t),0,\dot{\gamma}_2(\delta t),0]^T$. Let $\epsilon = 1/\omega$ and consider the change of variable $s = t/\epsilon$. Then the solution of \autoref{eq:ErSys} satisfies
\begin{equation}
\frac{d\mathbf{E}}{ds} = \epsilon \mathbf{F}(s,\mathbf{E}(s),1,\mathbf{0}) + \epsilon \delta \mathbf{g}(\epsilon s).
\end{equation}
The estimates in \autoref{eq:estimate_1} and \autoref{eq:estimate_nu} follow from Proposition \ref{Main}. Note that the dependence of $\mathbf{g}$ on the $\epsilon s$ time scale does not affect the proof, since $\mathbf{g}$ is still bounded from above as required.  
\end{proof}

\vspace{2mm}

For the case where the follower agent starts at the origin, this proposition implies that one can choose leader control inputs that drive the follower to track the trajectory $\gamma$ arbitrarily closely, albeit over a longer period of time than if the required bounds on the tracking error were weaker. This is a stronger result than is needed to ensure general $R$-confinement controllability, for which it is sufficient for the follower agent to be within a distance $R$ of the trajectory.

\vspace{2mm}

\begin{theorem}                                                                                                                                                                                                                                                                                                                                                                                  
\label{ConThe}
Suppose that $[x_l(0),y_l(0)]$ = $[1,0]^T$. Then there exists $r>0$ such that  $\|[x^i,y^i,v^i_x,v^i_y]\| < r$ for all $i = 1,2....N$ implies that system \ref{eq:MainSys} is 1-confinement controllable about all trajectories $\gamma$ for which $\mathbf{\gamma} \in \lbrace (\gamma_1,\gamma_2) : \gamma_1,\gamma_2 \in C^1([0,1]),\hspace{2mm} \gamma_1(0) = \gamma_2(0) = 0 \rbrace$.
\end{theorem}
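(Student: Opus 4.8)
The plan is to read off the theorem directly from the conclusion of Lemma~\ref{ShortLemma}, using the time reparametrization $T = 1/\delta$ to identify $\gamma^\delta(t)$ with $\gamma(t/T)$. First I would fix an arbitrary admissible trajectory $\gamma=(\gamma_1,\gamma_2)$ and construct explicit leader inputs that make the leader traverse the unit circle centered at the moving point $\gamma^\delta(t)$, namely
\begin{equation}
u_x(t) = -\omega \sin \omega t + \delta \dot\gamma_1(\delta t), \qquad u_y(t) = \omega \cos \omega t + \delta \dot\gamma_2(\delta t).
\end{equation}
Integrating with $[x_l(0),y_l(0)]^T = [1,0]^T$ yields $x_l(t) = \cos\omega t + \gamma_1^\delta(t)$ and $y_l(t) = \sin\omega t + \gamma_2^\delta(t)$, where $\gamma_i^\delta(t)=\gamma_i(\delta t)$. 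This initial condition is consistent precisely because $\gamma(0)=0$ places the circle's center at the origin at $t=0$, so $[1,0]^T$ lies on the circle; here I use the hypothesis $\gamma_1(0)=\gamma_2(0)=0$ in an essential way. Substituting this leader trajectory into the follower equations of \autoref{eq:MainSys} with $N=1$ reproduces \autoref{eq:PreEqX} with vector field $\mathbf{F}(t,\mathbf{X},\omega,\gamma^\delta)$ exactly.

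Next I would invoke Lemma~\ref{ShortLemma}. Setting $r = r_1$ and choosing $\omega > \omega_1$ and $\delta > \delta_1$ as permitted by the lemma, the bound $\|[x^1,y^1,v^1_x,v^1_y]\| < r$ gives $\|\mathbf{X}_0\| \le r_1$, so the follower solution exists on $[0,1/\delta]$ and satisfies $\|(x(t),y(t)) - \gamma^\delta(t)\| < 1$. I would then set the confinement horizon $T = 1/\delta$ (finite and positive since $\delta>\delta_1>0$), which makes $\gamma^\delta(t) = \gamma(\delta t) = \gamma(t/T)$ for $t\in[0,T]$, so that \autoref{eq:estimate_1} is exactly the follower condition $\|\gamma(t/T) - [x^1(t),y^1(t)]^T\| < 1$ demanded by the definition with $R=1$.

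The leader condition then follows with no further work: by construction $[x_l(t),y_l(t)]^T - \gamma^\delta(t) = (\cos\omega t,\sin\omega t)$, so $\|\gamma(t/T) - [x_l(t),y_l(t)]^T\| = 1 \ge R$ for all $t\in[0,T]$, with equality. To pass from $N=1$ to general $N$ I would observe that the follower subsystems in \autoref{eq:MainSys} are decoupled across followers, so each follower $i$ with $\|[x^i,y^i,v^i_x,v^i_y]\| < r$ independently solves its own copy of \autoref{eq:PreEqX} and inherits the same distance-$1$ bound; since $r = r_1$ depends only on $A$, $k$, $\alpha$ and not on $\gamma$, the single radius $r$ works uniformly over all admissible trajectories, which is what the statement requires.

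The theorem is therefore essentially a repackaging of Lemma~\ref{ShortLemma}, and the only genuine content beyond translating notation is the circular-leader construction. I expect the main obstacle to be the verification that, after substituting the closed-form leader trajectory, the reduced single-follower dynamics coincide \emph{identically} with \autoref{eq:PreEqX} (rather than merely approximately), together with confirming that the leader inputs above are the time derivatives of the stated leader position and are continuous on $[0,T]$ because $\gamma_i \in C^1$. Both are direct computations, and once they are in place the follower bound and the exact leader distance of $1$ give $1$-confinement controllability immediately.
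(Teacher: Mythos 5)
Your proposal is correct and follows essentially the same route as the paper's own proof: the identical leader inputs \autoref{eq:MainContr}, integration to the closed-form leader trajectory, substitution to obtain \autoref{eq:PreEqX}, and a direct appeal to Lemma~\ref{ShortLemma}. In fact you make explicit two points the paper leaves implicit --- that the leader's distance from $\gamma(t/T)$ is identically $1 \geq R$, and that decoupling across followers plus the $\gamma$-independence of $r_1$ lets the single radius $r$ serve all $N$ followers and all admissible trajectories --- so nothing is missing.
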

\begin{proof}
Let $\omega$ and $\delta$ be positive scalars. Consider the following control inputs for the leader:
\begin{eqnarray}
u_x(t) &=& -\omega \sin \omega t + \delta \dot{\gamma}_x(\delta t) \nonumber \\ 
u_y(t) &=& \omega \cos \omega t + \delta \dot{\gamma}_y(\delta t).  \label{eq:MainContr}
\end{eqnarray}
Integrating these functions from $0$ to $t$ for each $t \in$ $[0,1/\delta]$, we can explicitly solve for the state variables $x_l(t)$ and $y_l(t)$: 
\begin{eqnarray}
 x_l(t) &=&  \cos (\omega t)+\gamma_x(\delta t), \nonumber \\
 y_l(t) &=& \sin (\omega t)+\gamma_x(\delta t). \label{eq:xlylSolns}
 \end{eqnarray}
Define $\mathbf{F}$ as in \autoref{eq:Fdefinition} and substitute \autoref{eq:xlylSolns} into system \ref{eq:MainSys}.  Then the state variables $x^i, v^i_x, y^i, v^i_y$ in the solution of system \ref{eq:MainSys} satisfy 
\begin{equation}
\frac{d \mathbf{X}^i}{dt} = \mathbf{F}(t,\mathbf{X}^i,\omega,\gamma^\delta), \hspace{5mm} \mathbf{X}^i(0) = {\mathbf{X}}^i_0,
\end{equation}
for each $i \in \lbrace 1,2,...,N \rbrace$, where $\mathbf{X}^i = [x^i,v^i_x,y^i,v^i_y]^T$.  The result follows from Lemma \ref{ShortLemma}. 
\end{proof}

\vspace{2mm}

The assumption that $[x_l(0),y_l(0)]$ = $[1,0]^T$ is not too restrictive. If $[x_l(0),y_l(0)]^T$ is any element on the unit circle, then the leader control inputs are  
\begin{eqnarray}
\label{eq:phacon} 
u_x(t) &=& -\omega \sin (\omega t + \phi) + \delta \dot{\gamma}_x(\delta t) \nonumber \\ 
u_y(t) &=& \omega \cos (\omega t + \phi) + \delta \dot{\gamma}_y(\delta t) 
\end{eqnarray}
for an appropriately chosen $\phi$.  These inputs will have the same effect on the system.  Due to the continuous dependence of the solutions of system \ref{eq:MainSys} on its initial conditions and parameters, it can be argued further that there exists a neighborhood of the unit circle from which the leader can always be driven to the circle, after which it can execute the control inputs defined in \autoref{eq:phacon}.

\begin{remark}
We can also consider a more general setting in which the follower agents are running a consensus protocol while being confined by the leader. In this case, the followers' interactions can be modeled using an undirected simple graph with fixed topology, $\mathcal{G} = (\mathcal{V}, \mathcal{E})$, where $\mathcal{V}$ is the set of follower agents and $\mathcal{E}$ identifies the pairs of agents that interact with each other.  The graph is not necessarily connected. Then the $x$-coordinates of the follower agents' positions and velocities evolve according to 
\begin{eqnarray}
\label{eq:MainSys}
\dot{x}^i &=& v^i_x  \\ \nonumber
\dot{v}^i_x &=& \frac{A (x^i-x_l)}{[(x^i-x_l)^2+(y^i-y_l)^2]^{\alpha}} - k v^i_x  \\ \nonumber &\hspace{2mm}& - \sum_{j \in \mathcal{N}(i)}(x^i-x^j) - \sum_{j \in \mathcal{N}(i)}(v_x^i-v_x^j)~, 
\end{eqnarray}
with similar dynamics for the $y$-coordinates.  As in \autoref{eq:AvJacobian}, the Jacobian map for the averaged vector field $\mathbf{f}_{av,c}$ of the composite group of followers can be computed as 
\begin{equation}
D\mathbf{f}_{av,c}(\mathbf{0}) = 
\begin{bmatrix}
\mathbf{B} & 0 \\ 
0 & \mathbf{B} \\ 
\end{bmatrix},
\end{equation}
where
\begin{equation}
\mathbf{B} = 
\begin{bmatrix}
0 & \mathbf{I}_N \\ 
- \beta \mathbf{I}_N-\mathbf{\mathcal{L}} & -k\mathbf{I}_N - \gamma \mathbf{\mathcal{L}} \\ 
\end{bmatrix},
\end{equation}
in which $\mathbf{I}_N$ is the identity matrix of size $N \times N$, $\mathbf{\mathcal{L}}$ is the Laplacian of the graph $\mathcal{G}$, and $\beta = A(\alpha -1)$, $\gamma$, and $k$  are positive scalars.  The eigenvalues of $D\mathbf{f}_{av,c}(\mathbf{0})$  can be computed according to \cite{ren2008distributed}[Chapter 4] as:
\begin{equation}
\lambda_{i\pm} = -\frac{k}{2}+ \frac{ \gamma \mu_i \pm \sqrt{(\gamma \mu_i-k)^2-4(\beta - \mu_i)}}{2}~,
\end{equation}
where $\mu_i$ is the $i^{th}$ eigenvalue of $-\mathcal{L}$. Since the spectrum of $D\mathbf{f}_{av,c}(\mathbf{0})$ lies in the open left half-plane of $\mathbb{C}$ whenever $\beta = A(\alpha-1), ~k,$ and $\gamma$ are positive, this implies that confinement controllability of the leader-follower system is retained under inter-follower interactions that can be modeled as a graph with fixed topology. 
\end{remark} 

\section{Simulations} \label{sec:Simulations}


We validate our confinement control approach with simulations of two scenarios with different parameter values.  In Case 1, the target trajectory is $\gamma(\theta) = [2 \cos 2 \pi \theta \hspace{2mm} 2 \sin 2 \pi \theta]^T $, the final time condition is $T = 40 \pi$, and the other parameters are $A = 1$, $k = 0.05$, $\alpha = 10$, and $\omega =10$.  In Case 2, we set $\gamma(\theta) = [10 \cos 4 \pi \theta \hspace{2mm} 2 \pi \theta]^T $, $T= 80 \pi$, $A = 1$, $k = 0.03$, $\alpha = 5$, and $\omega =20$.  In both cases, the leader control inputs are chosen according to \autoref{eq:MainContr}, and there are three follower agents.  The initial positions and velocities of the followers are drawn uniformly randomly from a ball of radius  0.125 (Case 1) or 0.5 (Case 2) that is centered at the origin.  



\autoref{fig:Sims} shows the agent trajectories during the two simulated scenarios. In both cases, the followers remain well within a distance of 1 from the target trajectory, which confirms that the system is $1$-confinement controllable about this trajectory.  The followers in Case 2 display larger initial oscillations about the target trajectory than the followers in Case 1.  This is because the followers in Case 2 have a lower damping coefficient $k$ and initial velocities that were chosen from a larger neighborhood of the origin, and thus tended to be higher.  

We also note that the higher value of $\omega$ in Case 2 implies that the region of attraction of the original system is closer to the region of the attraction of the averaged system than in Case 1.  This caused the confinement strategy in Case 2 to be successful for a larger ball of possible initial conditions than in Case 1.  Indeed, we found that the followers in Case 1 did not remain confined when the initial conditions were drawn from outside the ball of radius 0.125.



\begin{figure}
 
\centering
    \subfigure[Case $1$]
    {
     \label{fig:SimsCase1}
   \includegraphics[trim = 60mm 50mm 60mm 57mm, scale=0.4]{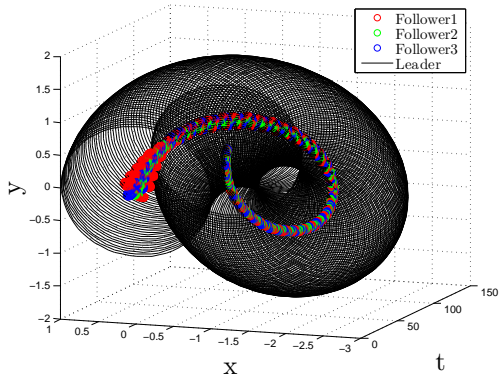}
      }
   
      \subfigure[Case $2$]
    {
     \label{fig:SimsCase2}
   \includegraphics[trim = 60mm 50mm 60mm 60mm, scale=0.4]{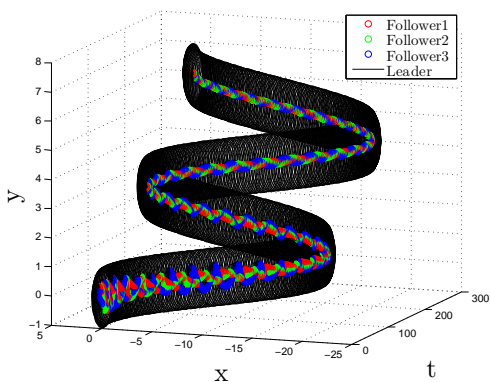} }
      \caption{Evolution of leader and follower agent positions over time.  The followers start at random positions within a neighborhood of the origin.} 
  \label{fig:Sims}   
   \end{figure}

\section{Robot Experiments}

\begin{figure}[t]
\centering
\vspace{2mm}
\includegraphics[trim = 0cm 0mm 0cm 0mm, clip, width=6cm]{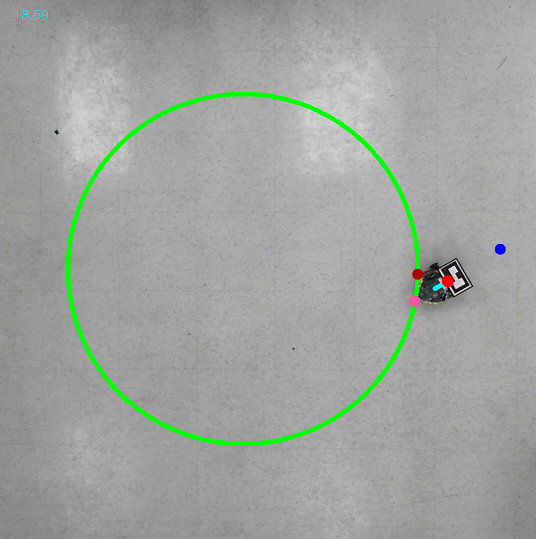} 
\caption{The robot being herded by a virtual leader along a target trajectory $\gamma_E$, shown in green.  The blue dot is the virtual leader's position, the pink dot is the center of the leader's orbit, and the dark red dot is the position of a simulated follower agent.  The light blue line on the robot indicates its orientation.}  
\label{fig:circleCameraView}
\end{figure}


\subsection{Experimental Setup}

We also tested our confinement control approach with an experiment in which a small differential-drive robotic platform {\it Pheeno} \cite{WilsonRAL2015} emulated a follower agent that was confined by a virtual leader. The experiment was performed in the ASU Autonomous Collective Systems Laboratory. One robot was placed in a $1.5$ m $\times$ $2.1$ m arena and marked with a 2D binary identification tag to allow for real-time position and orientation tracking. The tags were tracked using one overhead Microsoft LifeCam Studio Webcam with a resolution of $1920 \times 1080$ pixels at a frame rate of 30 FPS. The 2D binary tags were identified using standard OpenCV algorithms on a Windows computer.


To accommodate the nonholonomic constraint of the robot platform, the robot used a PI controller to keep its heading facing away from the virtual leader at all times. This heading control produces approximate holonomic motion in a differential-drive platform. To facilitate this motion, the leader's oscillation frequency was maintained below the bandwidth of the robots' motors. A desktop computer updated and transmitted the virtual leader's position and the robot's global position to the robot via WiFi at a frequency of 3 Hz. 







We defined a circular target trajectory, shown in \autoref{fig:circleCameraView}, for the robot to track in the experiment. The trajectory was given by $\gamma_E(t) = [R_{p}\cos(\omega_{p} t) \hspace{2mm} R_{p}(\sin \omega_{p} t)]^T$, where $R_{p} = 67$ cm and $\omega_{p} = 0.02$.  The robot was controlled using the parameters $A = R_{L}^{2\alpha -1}$, $R_{L} = 38$ cm, $k = 1$, $\alpha = 4$, and $\omega = 2$.

\begin{figure}[t]
\centering
\vspace{2mm}
\includegraphics[trim = 4mm 0mm 0cm 0mm, clip, scale=0.33]{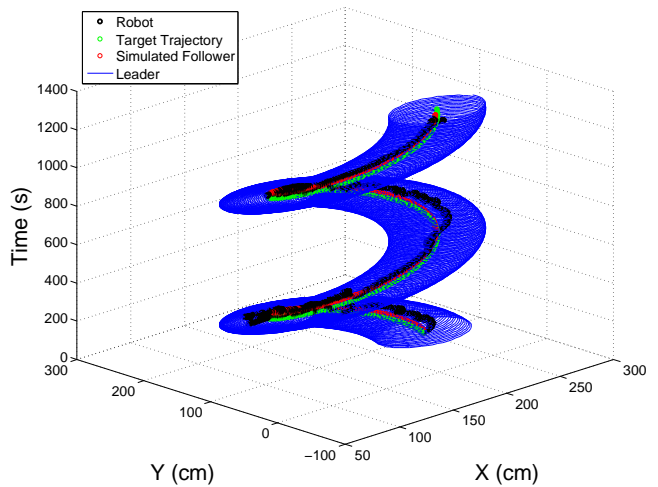} 
\caption{Time evolution of the positions of the leader agent (blue), simulated follower agent (red), and robot (black) for an experiment with the target trajectory $\gamma_E$ (green).}
\label{fig:experimentalResultsCircle}
\end{figure}

\subsection{Results}

We compared the trajectory of the robot during the experiment with the corresponding trajectory of a simulated follower agent, indicated by a dark red dot in \autoref{fig:circleCameraView}, that moves under the influence of the same leader agent.  Note that while the robot is nonholonomic, the simulated follower is holonomic.  \autoref{fig:experimentalResultsCircle} plots the trajectories of the virtual leader, the robot, and the simulated follower agent during the experiment.  The robot was initially placed inside the region of confinement bounded by the leader's path, and the simulated follower was initialized at the center of the leader's orbit. The figure illustrates that the robot was successfully kept inside the region of confinement for the duration of the experiment.  

\autoref{fig:experimentalResultsCircle} shows that the trajectory of the robot deviates farther from $\gamma_E$ than the trajectory of the simulated follower.  This is likely due to the nonlinear effects of friction on the shafts of the robots' motors.  The motors require a minimum input voltage to drive the robot's motion, which can be controlled at speeds above 2 cm/s.  These speeds are generated by the repulsive interaction potential in the robot's controller when the distance between the robot and the leader is below a threshold value.  Hence, the robot moves away from the leader only when it is within this distance; otherwise, it is stationary.  In contrast, the simulated follower does not have such a constraint on its speed and will move when the leader is farther away, causing its trajectory to adhere more closely to $\gamma_E$.  Thus, the experiment shows that when our confinement strategy is implemented in practice, the controller design must account for unmodeled dynamics in the physical platform by allowing only those velocity control inputs that exceed an appropriate lower bound.

\section{Conclusion}

We presented a leader-follower control strategy for confining a group of agents to a region around a target trajectory.  By constructing oscillatory leader inputs, we used averaging theory to show that follower agents can be confined to a neighborhood of the origin. 
From a control design perspective, one could investigate the quantitative effects of different parameters on the stability properties of the system. 
Well-developed computational tools can be used to estimate the size of the region of attraction of the origin \cite{chesi2011domain, kamyar2014polynomial}.  One could also augment this open-loop strategy with feedback to develop a more robust stabilization mechanism.  Incorporating multiple leaders and inter-agent collision avoidance into the control strategy is another direction for future work.





\vspace{2mm} 

\section{Acknowledgment}
K.E. thanks Matthias Kawski for useful suggestions regarding the proofs.

\vspace{2mm} 

\section*{APPENDIX}

Let $U$ be a subset of $\mathbb{R}^n$.  The Jacobian map for a function $\mathbf{f}:U \rightarrow \mathbb{R}^m$ at a state $\mathbf{x} \in U$ is given by:
\begin{equation}
D\mathbf{f}(\mathbf{x}) = 
\begin{bmatrix}
\frac{\partial f_1}{\partial x_1}(\mathbf{x}) & \cdots & \frac{\partial f_1}{\partial x_n}(\mathbf{x}) \\
\vdots & \ddots & \vdots \\
\frac{\partial f_m}{\partial x_1}(\mathbf{x}) & \cdots & \frac{\partial f_m}{\partial x_n}(\mathbf{x}) \\
\end{bmatrix}. \nonumber
\end{equation}

Let $U$ be a subset of the unit sphere in $\mathbb{R}^4$, and define the function $\mathbf{f}:U \rightarrow \mathbb{R}^4$ as:
\begin{equation}
\mathbf{f}(\mathbf{x}) = 
\begin{bmatrix}
\hspace{-2mm} & x_2 \hspace{2mm} \\
\hspace{-2mm} & \frac{1}{2\pi} \int^{2 \pi}_0 \frac{ A (x_1-\cos (s) )}{[(x_1-\cos (s) )^2+(x_3-\sin (s) )^2]^{\alpha}} ds - k x_2 \hspace{2mm}\\
\hspace{-2mm} & x_4 \hspace{2mm}\\ 
\hspace{-2mm} &  \frac{1}{2\pi} \int^{2 \pi}_0 \frac{ A (x_3-\sin (s) )}{[(x_1-\cos (s) )^2+(x_3-\sin (s) )^2]^{\alpha}} ds - k x_4 \hspace{2mm}
\end{bmatrix}, \nonumber
\end{equation}
where $A,k,\alpha \in \mathbb{R}^+$.  Define $g_1(x,t) = x-\cos(t)$, $g_2(x,t)=x-\sin(t)$, and $h(x,y,t) = g_1(x,t)^2 +g_2(y,t)^2$. The function $\mathbf{f}$ is continuously differentiable everywhere in $U$ and has a well-defined Jacobian at the origin. We compute the nontrivial partial derivatives of the Jacobian map $D\mathbf{f}(\mathbf{x})$ using Leibniz's rule for partial derivatives of integrals:
\begin{align}
\frac{\partial f_2}{\partial x_1} =& \frac{A}{2\pi} \int^{2 \pi}_0 \frac{ h(x_1,x_3,s)^{\alpha}-2 \alpha g_1(x_1,s)^2h(x_1,x_3,s)^{\alpha-1}}{h(x_1,x_3,s)^{2 \alpha}} ds \nonumber \\ 
\frac{\partial f_2}{\partial x_3} =& \frac{A}{2\pi} \int^{2 \pi}_0 \frac{- 2 \alpha g_1(x_1,s) g_2(x_3,s) h(x_1,x_3,s)^{\alpha-1}}{h(x_1,x_3,s)^{2 \alpha}} ds \nonumber \\ 
\frac{\partial f_4}{\partial x_1} =& \frac{A}{2\pi} \int^{2 \pi}_0 \frac{- 2 \alpha g_1(x_1,s)g_2(x_3,s)h(x_1,x_3,s)^{\alpha-1}}{h(x_1,x_3,s)^{2 \alpha}} ds \nonumber \\ 
\frac{\partial f_4}{\partial x_3} =& \frac{A}{2\pi} \int^{2 \pi}_0 \frac{ h(x_1,x_3,s)^{\alpha} - 2 \alpha g_2(x_3,s)^2h(x_1,x_3,s)^{\alpha-1}}{h(x_1,x_3,s)^{2 \alpha}} ds. \nonumber
\end{align}

Evaluating these partial derivatives at the origin, we obtain:
\begin{eqnarray}
\frac{\partial f_2}{\partial x_1}(\mathbf{0}) &=& \frac{A}{2\pi} \int^{2 \pi}_0  (1 -2 \alpha \sin^2(s))ds ~=~ -A(\alpha-1)  \nonumber \\
\frac{\partial f_2}{\partial x_3}(\mathbf{0}) &=& \frac{A}{2\pi} \int^{2 \pi}_0 - 2 \alpha \sin(s)\cos(s)ds ~=~ 0  \nonumber \\
\frac{\partial f_4}{\partial x_1}(\mathbf{0}) &=& \frac{A}{2\pi} \int^{2 \pi}_0 - 2 \alpha \sin(s)\cos(s) ds ~=~ 0 \nonumber \\
\frac{\partial f_4}{\partial x_3}(\mathbf{0}) &=& \frac{A}{2\pi} \int^{2 \pi}_0 (1 - 2 \alpha \cos^2(s)) ds ~=~ -A(\alpha-1). \nonumber
\end{eqnarray}

\vspace{2mm}


\bibliographystyle{plain}        

\bibliography{accref}  

\begin{thebibliography}{10}

\bibitem{baillieul1993stable}
John Baillieul.
\newblock Stable average motions of mechanical systems subject to periodic
  forcing.
\newblock {\em Dynamics and Control of Mechanical Systems: {T}he falling cat
  and related problems}, 1:1--23, 1993.

\bibitem{bellman1986vibrational}
Richard~E. Bellman, Joseph Bentsman, and Semyon~M. Meerkov.
\newblock Vibrational control of nonlinear systems: {V}ibrational
  stabilizability.
\newblock {\em IEEE Trans. on Automatic Control}, 31(8):710--716, 1986.

\bibitem{bennett2012comparative}
Brandon Bennett and Matthew Trafankowski.
\newblock A comparative investigation of herding algorithms.
\newblock In {\em Proc. of the Symp. on Understanding and Modelling Collective
  Phenomena (UMoCoP)}, pages 33--38, 2012.

\bibitem{bianchini1993controllability}
Rosa~Maria Bianchini and Gianna Stefani.
\newblock Controllability along a trajectory: {A} variational approach.
\newblock {\em SIAM Journal on Control and Optimization}, 31(4):900--927, 1993.

\bibitem{bressan2012control}
Alberto Bressan and Dongmei Zhang.
\newblock Control problems for a class of set valued evolutions.
\newblock {\em Set-Valued and Variational Analysis}, 20(4):581--601, 2012.

\bibitem{brockett1982control}
Roger~W. Brockett.
\newblock {\em Control theory and singular {R}iemannian geometry}.
\newblock Springer, 1982.

\bibitem{brockett2010control}
Roger~W. Brockett.
\newblock On the control of a flock by a leader.
\newblock {\em Proc. of the Steklov Institute of Mathematics}, 268(1):49--57,
  2010.

\bibitem{cao2012distributed}
Yongcan Cao, Wei Ren, and Magnus Egerstedt.
\newblock Distributed containment control with multiple stationary or dynamic
  leaders in fixed and switching directed networks.
\newblock {\em Automatica}, 48(8):1586--1597, 2012.

\bibitem{caponigro2015sparse}
Marco Caponigro, Massimo Fornasier, Benedetto Piccoli, and Emmanuel Tr{\'e}lat.
\newblock Sparse stabilization and control of alignment models.
\newblock {\em Mathematical Models and Methods in Applied Sciences},
  25(3):521--564, 2015.

\bibitem{chesi2011domain}
Graziano Chesi.
\newblock {\em Domain of attraction: analysis and control via SOS programming}.
\newblock Springer Science \& Business Media, 2011.

\bibitem{chitour2013global}
Yacine Chitour, Fr{\'e}d{\'e}ric Jean, and Ruixing Long.
\newblock A global steering method for nonholonomic systems.
\newblock {\em Journal of Differential Equations}, 254(4):1903--1956, 2013.

\bibitem{colombo2012confinement}
Rinaldo~M. Colombo and Nikolay Pogodaev.
\newblock Confinement strategies in a model for the interaction between
  individuals and a continuum.
\newblock {\em SIAM Journal on Applied Dynamical Systems}, 11(2):741--770,
  2012.

\bibitem{coron1992global}
Jean-Michel Coron.
\newblock Global asymptotic stabilization for controllable systems without
  drift.
\newblock {\em Mathematics of Control, Signals and Systems}, 5(3):295--312,
  1992.

\bibitem{gilary2003trapping}
Ido Gilary, Nimrod Moiseyev, Saar Rahav, and Shmuel Fishman.
\newblock Trapping of particles by lasers: the quantum {K}apitza pendulum.
\newblock {\em Journal of Physics A: Mathematical and General}, 36(25):L409,
  2003.

\bibitem{godhavn1999steering}
J.-M. Godhavn, Andrea Balluchi, Lara~S. Crawford, and S.~Shankar Sastry.
\newblock Steering of a class of nonholonomic systems with drift terms.
\newblock {\em Automatica}, 35(5):837--847, 1999.

\bibitem{guckenheimer1983nonlinear}
John Guckenheimer and Philip Holmes.
\newblock {\em Nonlinear oscillations, dynamical systems, and bifurcations of
  vector fields}, volume~42.
\newblock Springer Science \& Business Media, 1983.

\bibitem{haque2011biologically}
Musad~A. Haque, Amir~R. Rahmani, and Magnus Egerstedt.
\newblock Biologically inspired confinement of multi-robot systems.
\newblock {\em International Journal of Bio-Inspired Computation},
  3(4):213--224, 2011.

\bibitem{ji2008containment}
Meng Ji, Giancarlo Ferrari-Trecate, Magnus Egerstedt, and Annalisa Buffa.
\newblock Containment control in mobile networks.
\newblock {\em IEEE Trans. on Automatic Control}, 53(8):1972--1975, 2008.

\bibitem{kamyar2014polynomial}
Reza Kamyar and Matthew Peet.
\newblock Polynomial optimization with applications to stability analysis and
  control-alternatives to sum of squares.
\newblock {\em arXiv preprint arXiv:1408.5119}, 2014.

\bibitem{kapitza1951dynamic}
P.~L. Kapitza.
\newblock Dynamic stability of a pendulum with an oscillating point of
  suspension.
\newblock {\em Journal of Experimental and Theoretical Physics},
  21(5):588--597, 1951.

\bibitem{khalil2002nonlinear}
Hassan Khalil.
\newblock {\em Nonlinear Systems}.
\newblock Prentice Hall, 3rd edition, 2002.

\bibitem{lien2004shepherding}
Jyh-Ming Lien, Burchan Bayazit, Ross~T. Sowell, Samuel Rodriguez, and Nancy~M.
  Amato.
\newblock Shepherding behaviors.
\newblock In {\em Proc. IEEE Int'l. Conf. on Robotics and Automation (ICRA)},
  pages 4159--4164, 2004.

\bibitem{moreau2000practical}
Luc Moreau and Dirk Aeyels.
\newblock Practical stability and stabilization.
\newblock {\em IEEE Trans. on Automatic Control}, 45(8):1554--1558, 2000.

\bibitem{murray1993nonholonomic}
Richard~M. Murray and S.~Shankar Sastry.
\newblock Nonholonomic motion planning: Steering using sinusoids.
\newblock {\em IEEE Trans. on Automatic Control}, 38(5):700--716, 1993.

\bibitem{nevsic2001input}
D.~Ne{\v{s}}i{\'c} and Andrew~R. Teel.
\newblock Input-to-state stability for nonlinear time-varying systems via
  averaging.
\newblock {\em Mathematics of Control, Signals and Systems}, 14(3):257--280,
  2001.

\bibitem{paul1990electromagnetic}
Wolfgang Paul.
\newblock Electromagnetic traps for charged and neutral particles.
\newblock {\em Reviews of Modern Physics}, 62(3):531, 1990.

\bibitem{piccoli2014control}
Benedetto Piccoli, Francesco Rossi, and Emmanuel Tr{\'e}lat.
\newblock Control to flocking of the kinetic {C}ucker-{S}male model.
\newblock {\em arXiv preprint arXiv:1411.4687}, 2014.

\bibitem{piersonbio}
Alyssa Pierson and Mac Schwager.
\newblock Bio-inspired non-cooperative multi-robot herding.
\newblock In {\em Proc. Int'l. Conf. on Robotics and Automation (ICRA)}, 2015.

\bibitem{rahmani2009controllability}
Amirreza Rahmani, Meng Ji, Mehran Mesbahi, and Magnus Egerstedt.
\newblock Controllability of multi-agent systems from a graph-theoretic
  perspective.
\newblock {\em SIAM J. on Control and Optimization}, 48(1):162--186, 2009.

\bibitem{ren2008distributed}
Wei Ren and Randal~W. Beard.
\newblock {\em Distributed consensus in multi-vehicle cooperative control}.
\newblock Springer, 1st edition, 2008.

\bibitem{ren2010distributed}
Wei Ren and Yongcan Cao.
\newblock {\em Distributed coordination of multi-agent networks: emergent
  problems, models, and issues}.
\newblock Springer Science \& Business Media, 2010.

\bibitem{sanders2007averaging}
Jan~A. Sanders, Ferdinand Verhulst, and James~A. Murdock.
\newblock {\em Averaging methods in nonlinear dynamical systems}.
\newblock Springer-Verlag New York, 2nd edition, 2007.

\bibitem{sarychev2001stability}
Andrey Sarychev.
\newblock Stability criteria for time-periodic systems via high-order averaging
  techniques.
\newblock In {\em Nonlinear Control in the Year 2000}, volume~2, pages
  365--377. Springer, 2001.

\bibitem{sastry1993structure}
S.~Shankar Sastry and Richard Montgomery.
\newblock The structure of optimal controls for a steering problem.
\newblock In {\em IFAC Symp. on Nonlinear Control Systems Design (NOLCOS)},
  pages 385--390, 1992.

\bibitem{sontag2013mathematical}
Eduardo~D. Sontag.
\newblock {\em Mathematical control theory: deterministic finite dimensional
  systems}, volume~6.
\newblock Springer Science \& Business Media, 2013.

\bibitem{sussmann1987general}
H{\'e}ctor~J. Sussmann.
\newblock A general theorem on local controllability.
\newblock {\em SIAM Journal on Control and Optimization}, 25(1):158--194, 1987.

\bibitem{sussmann1991limits}
H{\'e}ctor~J. Sussmann and Wensheng Liu.
\newblock Limits of highly oscillatory controls and the approximation of
  general paths by admissible trajectories.
\newblock In {\em Proc. IEEE Conf. on Decision and Control (CDC)}, pages
  437--442, 1991.

\bibitem{tanner2004controllability}
Herbert~G. Tanner.
\newblock On the controllability of nearest neighbor interconnections.
\newblock In {\em Proc. IEEE Conf. on Decision and Control (CDC)}, pages
  2467--2472, 2004.

\bibitem{teel1995non}
Andrew~R. Teel, Richard~M. Murray, and Gregory~C. Walsh.
\newblock Non-holonomic control systems: from steering to stabilization with
  sinusoids.
\newblock {\em International Journal of Control}, 62(4):849--870, 1995.

\bibitem{WilsonRAL2015}
Sean Wilson, Ruben Gameros, Michael Sheely, Matthew Lin, Kathryn Dover, Robert
  Gevorkyan, Matt Haberland, Andrea Bertozzi, and Spring Berman.
\newblock Pheeno, a versatile swarm robotic research and education platform.
\newblock Accepted to {\it IEEE Robotics and Automation Letters (RA-L)}, 2016.

\end{thebibliography}

\end{document}